\pgfplotsset{compat=1.14}
\begin{document}
\title{Parallel Execution Fee Mechanisms}
 \author{Abdoulaye Ndiaye\orcidID{0009-0000-7466-6444}}
 \authorrunning{A. Ndiaye}
 \institute{New York University, New York NY 10012, USA\\
 \email{andiaye@stern.nyu.edu}\\
 \url{https://www.abdoulayendiaye.com/}}

 \maketitle
 
\abstract{This paper investigates how pricing schemes can achieve efficient allocations in blockchain systems featuring multiple transaction queues under a global capacity constraint. I  model a capacity-constrained blockchain where users submit transactions to different queues—each representing a submarket with unique demand characteristics—and decide to participate based on posted prices and expected delays.  I  find that revenue maximization tends to allocate capacity to the highest-paying queue, whereas welfare maximization generally serves all queues. Optimal relative pricing of different queues depends on factors such as market size, demand elasticity, and the balance between local and global congestion. My results have implications for the implementation of local congestion pricing for evolving blockchain architectures, including parallel transaction execution, directed acyclic graph (DAG)-based systems, and multiple concurrent proposers.

\textbf{Keywords:}
Blockchain, Fintech, Transactions, Parallel Execution, Fee Markets, Consensus}
\newpage
\section{Introduction}

 Blockchain technology is rapidly reshaping the global financial landscape. %
 In the United States, a notable milestone occurred in January 2024 when the Securities and Exchange Commission approved the trading of Bitcoin ETFs on public exchanges ~\cite{SECbit}. Cryptocurrencies such as Bitcoin and Ethereum have gained widespread adoption, enabling peer-to-peer transactions without the need for traditional intermediaries like banks. On platforms like Ethereum, smart contracts—self-executing agreements with conditions encoded in software—facilitate various financial services, from loans and insurance to decentralized exchanges.

Blockchain technology operates by executing transactions in a decentralized manner, ensuring that transactions are ultimately \textit{executed} (liveness) and that a decentralized network of computers—referred to as miners in proof-of-work systems or validators in proof-of-stake systems—can agree on the state of the blockchain after execution (safety). Despite the growing importance of blockchain technology, there are still significant limitations in their scalability, particularly in the efficient execution of transactions in the presence of congestion.  As blockchain applications expand beyond cryptocurrencies into decentralized finance (DeFi), supply chain management, and digital identity, the efficient allocation of blockchain resources has become increasingly critical.

Traditional blockchain protocols often rely on simple fee-based models where users attach fees to their transactions, and miners or validators prioritize transactions based on these fees. While straightforward, such Transaction Fee Mechanisms (TFMs) can lead to inefficiencies, including congestion of parts of the blockchain state, high transaction fees during peak demand, and suboptimal resource allocation.\footnote{For three hours on April 30, 2022, it cost at least \$6500 to send any transaction on the Ethereum blockchain because of a single anticipated NFT collection release; see \url{https://www.coindesk.com/business/2022/05/01/bayc-team-raises-285m-with-otherside-nfts-clogs-ethereum/}.} Moreover, as blockchains evolve to allow for parallel transaction settlement and support complex decentralized applications, the need for more sophisticated mechanisms that can handle heterogeneous transaction types becomes apparent.

While TFMs that guarantee the inclusion of a transaction in the next block are well-studied ~\cite{buterin2018blockchain}, ~\cite{roughgarden2020transaction}, ~\cite{shi_et_al:LIPIcs.ITCS.2023.97},  ~\cite{ndiaye2023control}, ~\cite{Ndiaye2024blockchainfees}, ~\cite{kiayias2023tieredmechanismsblockchaintransaction}, little is known about the design of Execution Fee Mechanisms (EFMs). EFMs refer to the protocols and systems that determine the order and manner in which transactions are processed and finalized on a blockchain network. In any decentralized system like blockchain, transactions are submitted by multiple users, often simultaneously. The mechanism by which these transactions are sequenced and confirmed is crucial for maintaining the network's integrity, security, and performance.

Parallel EFMs are at the heart of several emerging blockchain systems, including parallel execution blockchains, Directed Acyclic Graph (DAG)--based blockchains, and blockchains with multiple concurrent proposers.

For instance, parallel execution blockchains\footnote{Such as Solana, Avalanche, or the planned upgrade to the Ethereum blockchain.} divide the state represented by the blockchain into multiple, non-overlapping partitions or \textit{local fee markets}, each of which can handle transactions independently.\footnote{Note that one may argue that parallel execution is already the case in Ethereum, given the proliferation of Ethereum Layer 2s, or the co-existence of regular transactions and blobs.} Optimal pricing of these local markets can allow for a EFM where fees are determined by the demand within each partition rather than the entire network. %
In DAG-based blockchains\footnote{Such as Aptos, Sui, and IOTA.}, transactions are included in a graph of blocks without requiring them to be ordered in a single chain. However, their EFM is crucial for organizing the unordered transactions into a logical sequence for execution and achieving consensus on the final state across the entire network.\footnote{See  ~\cite{keidar2021all} for this process of "flattening the DAG needed for global consensus on transactions.} Lastly, in blockchains with multiple concurrent proposers\footnote{Such as in current proposals for the Ethereum blockchain, see \url{https://ethresear.ch/t/concurrent-block-proposers-in-ethereum/18777}.} a key challenge is to ensure that concurrent proposals do not lead to conflicts or forks that compromise the network's safety. A EFM is, therefore, needed to aggregate proposals from multiple validators or proposers.

In this paper, I embed a general queuing model into the standard price theory framework and study optimal posted-price EFMs for blockchains that can execute independent transactions in parallel. I  model a capacity-constrained blockchain execution system as an $N$-queue system that serves delay-sensitive customers. Each queue represents a submarket or a specific resource requested by transactions. \footnote{Such as a smart contract, a high-level resource that transactions try to access, or a shared object in object-centric blockchains.} Users submit transactions to these queues and decide upon arrival whether to proceed based on the posted price and expected delays that decrease their utility. A global capacity constraint arises from the need for consensus mechanisms to consider all transactions across queues. A larger volume of transactions leads to propagation delay on consensus security; ~\cite{bonneau2016bitcoin}.

In the context of this model, I  ask the following research questions: how does revenue maximization affect the allocation of capacity across queues, and under what conditions does it lead to the exclusion of lower-paying queues? What are the welfare implications of different pricing strategies, and how can we design prices that maximize social welfare while accounting for the global capacity constraint? How do market characteristics such as demand elasticity and market size affect the optimal relative pricing across queues?

To address these questions, I first examine the case where the protocol or miners/validators aim to maximize revenue. Formally, such a protocol maximizes the sum of fees collected from all served queues, subject to the local equilibrium conditions in each queue and the global capacity constraint. The local equilibrium condition ensures that, in each queue, the marginal user's expected utility is her outside option—users join a queue if their valuation exceeds the price plus expected delay costs.  I find that there exists a threshold capacity level below which allocating all capacity to the highest-paying queue maximizes revenue. This result highlights a potential inefficiency in revenue maximization: the system may exclude transactions from other queues that could contribute positively to social welfare.

Next, I consider the objective of maximizing social welfare. The analysis shows that, in contrast to revenue maximization, the welfare-maximizing allocation generally involves serving all queues. By distributing capacity across all queues, the system ensures that users in different submarkets can access the resources they require. Then I derive the socially optimal relative prices for each queue that implement the welfare-maximizing allocation. These prices are designed as Pigouvian taxes that internalize both the local and global congestion externalities imposed by each executed transaction.

To provide further insights, I specialize the model to a setting where the time between arrivals is exponentially distributed, and execution times are also exponentially distributed. Each market has an isoelastic demand function characterized by a specific elasticity parameter. I derive explicit formulas for the socially optimal prices as a function of market characteristics such as demand elasticity, market size, and congestion levels. 

When the degree of parallelization is high (i.e., the system can process many transactions concurrently), and local congestion dominates global congestion, the ratio of socially optimal prices between any two queues approximates the ratio of their demand intensities normalized by market sizes. In particular, when demand is highly elastic and local congestion effects are strong, the optimal relative price in a queue is approximately proportional to the ratio of demand relative to market size. This implies that setting prices based on the relative demand intensity in each queue approximates the welfare-maximizing solution.

These findings have important implications for the design of Parallel Execution Fee Mechanisms in evolving blockchain architectures. On the one hand, a revenue-maximizing validator favors uniform pricing and serving only the highest-paying user category. On the other hand, by implementing local fee markets—where transactions are assigned to queues with different relative prices based on the resources they access— protocol designers can steer the blockchain closer to efficiency and scalability in transaction execution. %
In practice, this means that blockchains can define local prices for each state, contract, or program object and employ adaptive base fee mechanisms that adjust prices based on local demand conditions.\footnote{Such as Ethereum's EIP-1559 fee mechanism studied in ~\cite{roughgarden2021transaction} and ~\cite{Ndiaye2024blockchainfees}.} By doing so, the blockchain can prevent high-demand areas from congesting the entire network and ensure that capacity is allocated efficiently across all resources.

\subsection{Related Work}
This paper contributes to the broader economics literature on the market design of blockchain technology  ~\cite{budishqje}, ~\cite{leshnoaerinsights}, ~\cite{hannajel}. ~\cite{buterin2018blockchain}, and ~\cite{ndiaye2023control}, ~\cite{Ndiaye2024blockchainfees} study the question of pricing blockspace---that is, determining optimal fee mechanisms for including transactions in blockchain blocks under capacity constraints. ~\cite{shi_et_al:LIPIcs.ITCS.2023.97}, ~\cite{roughgarden2021transaction}, ~\cite{roughgarden2020transaction},  ~\cite{chung2023foundations}, and 
~\cite{bahrani_et_al:LIPIcs.AFT.2024.29} provide foundational analysis of transaction fee mechanisms, focusing on blockchains with linear transaction ordering. This work opens the analysis of transaction execution by modeling complexities introduced by parallel execution in multi-queue blockchain systems. In doing so, I build on the literature on the pricing of queues ~\cite{naor1969regulation}, ~\cite{mendelson_pricing_1985},~\cite{afeche2004pricing} and emphasize the balance between global and local congestion. Beyond the theoretical contributions, my results have practical implications for blockchain system design and can help improve the efficiency and scalability of both new and existing blockchain technologies. My framework is applicable to other settings where multi-queue systems are present, and there is potential for congestion, such as supply chain management, cloud computing, and online service platforms.

The remainder of the paper is organized as follows. Section \ref{sec: example} highlights the limits of traditional blockchain fee models with a two-queue example. Section \ref{sec: model} presents the model setup, including users, local equilibrium conditions, and global inclusion constraints. In Section \ref{subsec: revenue_q}, I analyze the revenue maximization problem and derive the conditions under which capacity allocation favors the highest-paying queue. Sections \ref{subsec: welfare_q1} and \ref{subsec: welfare_q2}  focus on welfare maximization, characterizing the socially optimal pricing strategies and their implications. Finally, Section \ref{sec: conclusion_q} concludes the paper and suggests avenues for future research.

\section{The Problem of Execution in Standard TFMs: An  Example}\label{sec: example}
In this section, I illustrate the challenges associated with executing transactions in standard Transaction Fee Mechanisms (TFMs) through a simple example. Consider a blockchain system with two separate queues, Queue A and Queue B, each holding three transactions awaiting inclusion and execution in the blockchain. The expected values of the transactions in Queues A and B are $\mathbb{E}[v_a]=10$ and $\mathbb{E}[v_b]=6$, respectively.

 Figure \ref{fig:transaction_queues} depicts the state of the two queues. In each period, two new transactions arrive in each queue. However, due to the necessity for global consensus, the blockchain can collect and process up to \emph{five} transactions at a time.
\begin{figure}[h]
    \centering
        \begin{tikzpicture}[scale=0.8] %
        \node at (-1, 1) {Queue A: $\mathbb{E}[v_a]=10$};

        \foreach \i/\x/\val in {1/2/15, 2/4/10, 3/6/5} {
            \draw (\x, 0) rectangle ++(2,2);
            \node at (\x + 1, 1) {\val};
            \node at (\x + 1, 2.5) {$a_{\i}$};
        }

        \node at (-1, -4) {Queue B: $\mathbb{E}[v_b]=6$};

        \foreach \i/\x/\val in {1/2/8, 2/4/6, 3/6/4} {
            \draw (\x, -5) rectangle ++(2,2);
            \node at (\x + 1, -4) {\val};
            \node at (\x + 1, -2.5) {$b_{\i}$};
        }
    \end{tikzpicture}
    \caption{Example Transaction Queues}
    \label{fig:transaction_queues}
\end{figure}

In Queue A, there are three transactions denoted as $a_1$, $a_2$, and $a_3$, with individual values of 15, 10, and 5, respectively. Similarly, Queue B contains transactions $b_1$, $b_2$, and $b_3$, with values of 8, 6, and 4. The higher expected value in Queue A indicates that, on average, transactions in this queue are more valuable to the network or its users compared to those in Queue B.

\subsection{The Problem with Global Ordering and a Uniform Price}
Under a standard TFM, the selection of transactions for inclusion is typically based on the fees attached to them.
With a uniform price for inclusion and a global ordering for all transactions, the global capacity constraint can lead to an imbalance in how transactions from different queues are executed. Specifically, more transactions from Queue A are processed than those from Queue B, even though transactions from both queues could be executed in parallel without interference. This results in Queue B becoming underserved, causing its backlog to grow over time. Figure \ref{fig: global_order} illustrates this scenario.
\begin{figure}[h]
    \centering
       \begin{tikzpicture}[scale=0.5] %
        \node at (-1, 9) {Global Ordering for Execution:};
        \foreach \i/\x/\val/\queue in {1/2/15/$a_1$, 2/4/\textcolor{red}{12}/$\textcolor{red}{a^*_4}$, 3/6/10/$a_2$, 4/8/8/$b_1$, 5/10/\textcolor{red}{7}/$\textcolor{red}{a^*_5}$} {
            \draw (\x, 5) rectangle ++(2,2);
            \node at (\x + 1, 7.5) {\queue};
            \node at (\x + 1, 5.5) {\val};
        }

        \node at (-1, 1) {Queue A:};
        \foreach \i/\x/\val in {1/2/5} {
            \draw (\x, 0) rectangle ++(2,2);
            \node at (\x + 1, 1) {\val};
            \node at (\x + 1, 2.5) {$a_3$};
        }
\node at (5, 1) { \dots};
        \node at (-1, -4) {Queue B:};
        \foreach \i/\x/\val/\queue in {1/2/7/$b_5$, 2/4/6/$b_2$, 3/6/5/$b_4$, 4/8/4/$b_3$} {
            \draw (\x, -5) rectangle ++(2,2);
            \node at (\x + 1, -4) {\val};
            \node at (\x + 1, -2.5) {\queue};
        }
        \node at (11, -4) { \dots};
    \end{tikzpicture}

    \caption{Global Ordering under Uniform Price. Newly arrived and executed transactions are highlighted in red and starred.}
    \label{fig: global_order}
\end{figure}
 
The top portion of the figure represents the global execution order, where transactions from both Queue A and Queue B are interleaved based on their arrival times and values. Transactions $a_1$, $a_2$, and $b_1$ are included in the execution queue. Additionally, new higher-value transactions $a_4^*$ and $a_5^*$ (highlighted in red and starred) arrive in Queue A with values of 12 and 7, respectively. Due to their higher values, these transactions are immediately prioritized in the global ordering.

The middle section shows Queue A's state. Transaction $a_3$ remains in the queue with a value of 5. While some transactions from Queue A are being executed, new higher-value transactions continue to arrive, maintaining its dominance in the execution queue.

The bottom section illustrates Queue B's state. Transactions $b_2$ to $b_5$ accumulate in the queue with values ranging from 4 to 7. Despite the continuous arrival of transactions, Queue B's transactions are not prioritized in the global execution order due to their lower values compared to those in Queue A.

Because of the global capacity constraint—only five transactions can be executed at a time—the mechanism tends to favor transactions with higher values to maximize immediate throughput or revenue. Transactions from Queue B could be processed in parallel with those from Queue A without any conflicts or interference. However, the global ordering does not account for this possibility, resulting in suboptimal use of the system's parallel processing capabilities.
\subsection{A Potential Solution: Market Value-Weighted Ordering}
To address the issues above, we study a potential solution we will call \textit{Market Value-Weighted Ordering}. Suppose that the expected values of transactions in Queues A and B, denoted by $\mathbb{E}[v_a]$ and $\mathbb{E}[v_b]$ respectively, are known or can be reliably estimated. This information could be derived from historical data, statistical analysis, or real-time monitoring of transaction patterns.

The key idea is to adjust or normalize the bids of transactions in each queue according to the expected value of that queue. Specifically, we treat each transaction as if its bid is scaled by the inverse of the expected value of its queue. For transactions in Queue A and Queue B, we adjust their bids as follows:

  $$a'_i =\frac{a_i}{\mathbb{E}[v_{a}]},\quad b'_i =\frac{b_i}{\mathbb{E}[v_{b}]}.$$
  By scaling the bids in this manner, we standardize the bids across queues, allowing for a comparison of transactions based on their relative value within their respective queues. Figure \ref{fig: mvw} illustrates these adjusted bids.
  
    \begin{figure}[h]
        \centering
          \begin{tikzpicture}[scale=0.8] %
        \node at (-1, 1) {Queue A': $\mathbb{E}[v_{a'}]=1$};

        \foreach \i/\x/\val in {1/2/1.5, 2/4/1, 3/6/0.5} {
            \draw (\x, 0) rectangle ++(2,2);
            \node at (\x + 1, 1) {\val};
            \node at (\x + 1, 2.5) {$a'_{\i}$};
        }

        \node at (-1, -4) {Queue B': $\mathbb{E}[v_{b'}]=1$};

        \foreach \i/\x/\val in {1/2/$\frac{4}{3}$, 2/4/1, 3/6/$\frac{2}{3}$} {
            \draw (\x, -5) rectangle ++(2,2);
            \node at (\x + 1, -4) {\val};
            \node at (\x + 1, -2.5) {$b'_{\i}$};
        }
    \end{tikzpicture}

        \caption{Market Value-Weighted Ordering. Each transaction is treated \textit{as if} its bid is $a_i/\mathbb{E}[v_{a}]$ or $b_i/\mathbb{E}[v_{b}]$}
        \label{fig: mvw}
    \end{figure}

  Under this Market Value-Weighted Ordering, the system evaluates transactions based on their adjusted bids, resulting in a more balanced execution of transactions from both queues. Figure \ref{fig: mvw_ordering} illustrates how transactions are selected for execution under this mechanism.

\begin{figure}[h]
    \centering
        \begin{tikzpicture}[scale=0.5] %
        \node at (-1, 9) {Market Value-Weighted Ordering for Execution:};
        \foreach \i/\x/\val/\queue in {1/2/1.5/$a'_1$, 2/4/\textcolor{blue}{$\frac{4}{3}$}/$\textcolor{blue}{b^{*'}_1}$, 3/6/1.2/$a'_4$, 4/8/\textcolor{blue}{$\frac{7}{6}$}/$\textcolor{blue}{b^{*'}_5}$, 5/10/1/$a'_2$} {
            \draw (\x, 5) rectangle ++(2,2);
            \node at (\x + 1, 7.5) {\queue};
            \node at (\x + 1, 5.5) {\val};
        }

        \node at (-1, 1) {Queue A':};
        \foreach \i/\x/\val/\queue in {1/2/0.7/$a'_5$,1/4/0.5/$a'_3$} {
            \draw (\x, 0) rectangle ++(2,2);
            \node at (\x + 1, 1) {\val};
            \node at (\x + 1, 2.5) {\queue};
        }
\node at (7, 1) { \dots};
        \node at (-1, -4) {Queue B':};
        \foreach \i/\x/\val/\queue in {1/2/1/$b'_2$, 2/4/$\frac{5}{6}$/$b'_4$, 3/6/$\frac{2}{3}$/$b'_3$} {
            \draw (\x, -5) rectangle ++(2,2);
            \node at (\x + 1, -4) {\val};
            \node at (\x + 1, -2.5) {\queue};
        }
        \node at (9, -4) { \dots};
    \end{tikzpicture}

    \caption{Execution under Market Value-Weighted Ordering. Executed transactions from queue B' are highlighted in blue and starred.}
    \label{fig: mvw_ordering}
\end{figure}

Transactions $a'_1$, $b'^{*}_1$, $a'_4$, $b'^{*}_5$, and $a'_2$ are selected for execution. Transactions from Queue B' that are executed are highlighted in blue. The lower-value transactions remain in their respective queues, awaiting future execution based on their adjusted bids and arrival times.

This example demonstrates the potential social value of relative pricing and suggests that value-weighted relative pricing of different queues can be approximately welfare-maximizing. In the remainder of this article, I will generalize this idea in a model of a blockchain with parallel execution and a global capacity constraint due to consensus.

\section{Model}\label{sec: model}
In this section, I present a formal model of a capacity-constrained blockchain execution system. The system is modeled as an $N$-queue system that serves delay-sensitive customers. These queues can be associated with each smart contract, each high-level resource that transactions try to access, or each shared object in the case of object-centric blockchains.
\paragraph{Setup.}
I assume that execution times are independently and identically distributed (i.i.d.) with unit mean.\footnote{For transactions with different execution times, we can interpret the derived prices below as gross prices rather than per-unit prices. This simplification allows us to focus on the core dynamics without loss of generality.} Each user submits a transaction that arrives in one of the queues $i\in \{1,\dots,N\}$, following an exogenous Poisson process with rate (or market size) $\Lambda_i$. Since the consensus mechanism takes into account transactions in all queues, there is a global capacity constraint for inclusion, meaning that the total number of transactions that can be included across all queues is limited. For simplicity, we consider mechanisms with posted prices $p_i$ for each submarket or queue $i$. Upon arrival and observing the posted prices, users decide whether or not to submit their transaction at the posted price $p_i$, taking into account potential delays and their own valuations.

\paragraph{User Valuations.}
Users are considered to be atomistic relative to the market size, meaning that each individual user's actions have a negligible impact on the overall system. They differ in their valuations $v$, representing their willingness to pay for immediate execution without delay. For each submarket $i$, valuations are independently and identically distributed (i.i.d.) draws from a continuous distribution $\Phi_i$ (independent of arrival and execution times) with probability density function $\phi_i$. I assume that $\phi_i$ is strictly positive and continuous on the positive interval  $[\underline{v}, \bar{v}]$. Let $\bar{\Phi}_i(v) = 1 - \Phi_i(v)$ denote the complementary cumulative distribution function, representing the probability that a user's valuation exceeds $v$. If all transactions with values greater than $v$ join queue $i$, the arrival (or demand) rate in market $i$ will be $\lambda_i = \Lambda_i \bar{\Phi}_i(v)$. Conversely, when the arrival rate is $\lambda_i$, the marginal value $v$ is equal to $\bar{\Phi}_i^{-1}(\lambda_i/\Lambda_i)$, where $\bar{\Phi}_i^{-1}$ is the inverse of $\bar{\Phi}_i$. 

Following ~\cite{afeche2004pricing}, let $V_i(\lambda_i)$ denote the expected aggregate (gross) value in submarket $i$ per unit of time without delay. Then, the downward-sloping marginal value (or inverse gross demand) function $V^{'}_i(\lambda_i) \equiv \bar{\Phi}_i^{-1}(\lambda_i/\Lambda_i)$ defines a one-to-one mapping between the demand rate $\lambda_i$ and the marginal value $V^{'}_i(\lambda_i)$. Each $V_i$ is increasing and is assumed to be strictly concave, $V_{i}^{'}(\lambda_i)>0,V_{i}^{''}(\lambda_i)<0$ for $\lambda_i < \Lambda_i$.
\paragraph{Delay Costs.}
Users are sensitive to delays in transaction execution. I consider the following utility function for a user with valuation $v$ who pays a price $p$ and experiences a delay of $t$ units of time:
\begin{align}\label{eq: utility delay}
    u(v,t,p,i)=v \cdot D_i(t)-C_i(t)-p
\end{align}
In this expression, $p$ is the price paid by the user to submit the transaction. The term  $D_i(t)$ is a multiplicative delay discount function for queue $i$, capturing how the user's valuation decreases with delay. For example, $D_i(t)$ could be a discount factor like $e^{-d_i t}$, where $d_i$ is the discount rate.  The term $C_i(t)$ is an additive delay cost function for queue $i$, representing additional costs incurred due to delay, such as opportunity costs or penalties. These costs capture a variety of losses that can occur due to the deterioration of execution performance with delay.\footnote{Typical costs due to slow execution can be the failure to purchase a good, loss of an arbitrage opportunity, sandwich-attacked transactions, and other MEV attacks.}

Let $\boldsymbol{\lambda}\equiv(\lambda_1,\dots,\lambda_N)$ denote a vector of demand rates in each submarket. Each user in queue $i$ maximizes her own expected utility, which she forecasts using the distribution of the steady-state delay $\tilde{W}(\lambda_i)$. The delay depends on the set of paying users only through the resulting demand rate $\lambda_i$ and is not affected by the actions of an individual atomistic user. In addition, we allowed the individual delay costs $D_i(t)$ and $C_i(t)$ to depend directly on $i$, which can reflect the selection of different types of users in queues. Let $\overline{D}_{i}(\lambda_i) \equiv \mathbb{E}[D_i(\tilde{W}(\lambda_i))]$ and $\overline{C}_{i}(\lambda_i) \equiv \mathbb{E}[C_i(\tilde{W}(\lambda_i))]$ be the expected delay discount and delay cost functions, respectively. Given $\lambda_i$, a user with value $v_i$ for submarket $i$ who pays $p_i$ has expected utility 
\begin{equation}\label{eq: util_lambda}
    u(v_i|p_i,\lambda_i)\equiv v_i \cdot \overline{D}_{i}(\lambda_i) - \overline{C}_{i}(\lambda_i)-p_i.
\end{equation}

\paragraph{Local Equilibrium Demand.}
I now consider the equilibrium behavior of users in each queue. Let $i\in\{1,\dots,N\} $ and $p_i$ the price in submarket $i$. Suppose $V_i$ is continuously differentiable in $\mathbb{R}^+$ and that the net value to the highest value user of being served immediately in each queue is positive, that is $V_{i}^{'}(0)\overline{D}_{i}(0)-\overline{C}_i(0)>0$. This condition ensures that there is a positive net benefit to participating in the market for at least some users. Without loss of generality, I index the queues in decreasing order (without ties) of their net value of being served immediately: $V_{1}^{'}(0)\overline{D}_{1}(0)-\overline{C}_1(0)>V_{2}^{'}(0)\overline{D}_{2}(0)-\overline{C}_2(0)>\dots>V_{N}^{'}(0)\overline{D}_{N}(0)-\overline{C}_N(0)$. Given a price $p_i$, queue $i$ is active (i.e., has positive demand) if the highest-value user obtains positive expected utility when served immediately: $V^{'}_i(0) \cdot \overline{D}_{i}(0) - \overline{C}_{i}(0)>p_i$. The marginal user has valuation $V^{'}_i(\lambda_i(p_i))$ and zero expected utility in equilibrium. That is, in any Nash equilibrium, users join if, and only if demand in market $i$, $\lambda_i(p_i)$, satisfies
\begin{align} \label{eq: eqm demand}
    u(V^{'}_i(\lambda_i(p_i))|p_i,\lambda_i)=V^{'}_i(\lambda_i(p_i))\cdot \overline{D}_{i}(\lambda_i) - \overline{C}_{i}(\lambda_i) - p_i = 0
\end{align}
     This equilibrium condition can be interpreted in at least two ways. If users can choose which queue to join, entry and exit occur \emph{across} queues in equilibrium until the expected utility from joining any queue equals their outside option (which is normalized to zero).\footnote{This would be, for instance, the case of multi-proposer consensus, or DAG-based blockchains where users choose to which part of the graph they send their transactions.} Second, if the protocol dictates which queue transactions are assigned to (e.g., based on transaction type or resource accessed), entry and exit occur \emph{within} each queue, and the expected utility for the marginal user in each queue equals zero. Users decide whether to participate based on the conditions in their assigned queue. Thus, the equilibrium condition maps the demand rate $\lambda_i$ to the price in queue $i$ and vice-versa for queues that are active. Henceforth, we will write such expression as $p_i(\lambda_i)$.
\paragraph{Global Inclusion Constraint.}

Because all transactions need to be considered for consensus before the execution phase, there is a global capacity constraint on the total number of transactions that can be included. Let $\kappa$ denote the global capacity of transactions that can be served per unit of time. The capacity constraint is then \begin{equation}\label{eq: global_capacity} \sum_{i=1}^{N} \lambda_i \leq \kappa: \end{equation}. This constraint implies that the sum of the demand rates across all queues cannot exceed the global capacity $\kappa$. It reflects limitations such as block size, network bandwidth, and the need for synchronization across the network.

In this analysis, I focus on instances where the global capacity constraint is \emph{binding}, meaning that the total demand equals the capacity. This situation is common in blockchain systems during periods of high demand. The problem of variable global capacity would deliver similar results.
\section{Results}
In this section, I analyze the implications of the model for revenue maximization and welfare maximization.
\subsection{Revenue Maximization}\label{subsec: revenue_q}
We begin by examining how a protocol or miners/validators aiming to maximize revenue would set prices and allocate capacity across the different queues.
\paragraph{Revenue.} Let $\mathcal{S}$ denote the set of \emph{served queues}, i.e., the queues that are active and receive a positive capacity allocation. The protocol's revenue is the total fees collected from all served queues, which can be expressed as $\sum_{i \in \mathcal{S}} \lambda_i p_i(\lambda_i)$, where $\lambda_i$ is the demand rate in queue $i$, and $p_i(\lambda_i)$ is the price charged in queue $i$ as a function of the demand rate. Using the equilibrium condition from equation \eqref{eq: eqm demand}, the revenue maximization problem can be expressed in terms of demand rates. For simplicity of notation, we assume here that all queues are served.\footnote{The general expression is \begin{align} \label{eq: revenue_S}
\Pi =\max_{\mathcal{S},(p_i;i\in\mathcal{S})} \sum_{i\in\mathcal{S}} \lambda_i V_{i}^{'}(\lambda_i)\cdot \overline{D}_{i}(\lambda_i) - \lambda_i \cdot \overline{C}_{i}(\lambda_i).
\end{align}} 
\begin{align} \label{eq: revenue}
\Pi =\max_{(p_1,\dots,p_N)} \sum_{i=1}^{N} \lambda_i V_{i}^{'}(\lambda_i)\cdot \overline{D}_{i}(\lambda_i) - \lambda_i \cdot \overline{C}_{i}(\lambda_i)
\end{align}

Our objective is to find the set of prices $(p_1, p_2, \dots, p_N)$ and served queues $\mathcal{S}$ that maximize revenue $\Pi$, subject to the local equilibrium condition \eqref{eq: eqm demand} for all served queues and the global capacity constraint \eqref{eq: global_capacity}.
I consider uniform pricing where $p_1=\dots=p_N=p \in \mathbb{R}_+$ and optimal relative prices $(p_1,\dots,p_N) \in \mathbb{R}_+^N$. The following proposition characterizes the revenue-maximizing allocation under both pricing strategies.

\begin{proposition}
There exists a threshold capacity $\underline{\kappa} \in (0,+\infty)$ such that for all total capacities $ \kappa\leq\underline{\kappa}$, the revenue-maximizing uniform price and the revenue-maximizing relative prices allocate all capacity to the highest price queue, i.e., $\mathcal{S}=\{1\}$.
\end{proposition}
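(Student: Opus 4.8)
The plan is to analyze the revenue objective \eqref{eq: revenue} as the capacity $\kappa \to 0$ and show that, in that regime, concentrating all demand in queue $1$ strictly dominates any split. First I would observe that when the global constraint \eqref{eq: global_capacity} binds with a small $\kappa$, every served queue operates at a small demand rate $\lambda_i$, so delays are negligible: $\overline{D}_i(\lambda_i) \to \overline{D}_i(0)$ and $\overline{C}_i(\lambda_i) \to \overline{C}_i(0)$ by continuity of the steady-state delay in the demand rate. Hence the per-transaction marginal revenue contributed by queue $i$ at demand $\lambda_i \approx 0$ is approximately $V_i'(0)\overline{D}_i(0) - \overline{C}_i(0)$, which by the indexing convention adopted in the model is \emph{strictly largest} for $i=1$. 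This is the economic heart of the argument: at low volume there is no local congestion to trade off against, so all capacity should go where each unit of throughput is worth the most.

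Next I would make this rigorous by comparing revenues directly. Fix any feasible allocation $\boldsymbol{\lambda}$ with $\sum_i \lambda_i = \kappa$ and consider the alternative allocation that moves all mass onto queue $1$, i.e. $\lambda_1' = \kappa$, $\lambda_j' = 0$ for $j \neq 1$. Write $R_i(\lambda) \equiv \lambda V_i'(\lambda)\overline{D}_i(\lambda) - \lambda \overline{C}_i(\lambda)$ for the revenue from queue $i$. Since $R_i(0) = 0$, a first-order expansion gives $R_i(\lambda_i) = \lambda_i \big(V_i'(0)\overline{D}_i(0) - \overline{C}_i(0)\big) + o(\lambda_i)$ uniformly for $\lambda_i \in [0,\kappa]$, using that $V_i$ is $C^1$ and concave and that $\overline{D}_i, \overline{C}_i$ are continuous at $0$. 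Therefore
\begin{align}
R_1(\kappa) - \sum_{i=1}^N R_i(\lambda_i)
= \kappa\big(V_1'(0)\overline{D}_1(0)-\overline{C}_1(0)\big) - \sum_{i=1}^N \lambda_i\big(V_i'(0)\overline{D}_i(0)-\overline{C}_i(0)\big) + o(\kappa).
\end{align}
Let $m_i \equiv V_i'(0)\overline{D}_i(0)-\overline{C}_i(0)$, so $m_1 > m_2 > \dots > m_N$ and $m_1 > 0$. The leading term equals $\sum_{i=1}^N \lambda_i (m_1 - m_i) \ge (\kappa - \lambda_1)(m_1 - m_2) \ge 0$, and it is bounded below by a positive multiple of $\kappa$ whenever a nonnegligible share of capacity is allocated away from queue $1$. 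Combining the leading term with the uniform $o(\kappa)$ remainder, there is $\underline{\kappa} > 0$ such that for all $\kappa \le \underline{\kappa}$ the difference is nonnegative for every feasible $\boldsymbol{\lambda}$, and strictly positive unless $\boldsymbol{\lambda}$ already puts all mass on queue $1$; hence $\mathcal{S} = \{1\}$ at the revenue optimum. The same comparison applies verbatim to the uniform-price case, since the uniform-price allocation is just a particular feasible $\boldsymbol{\lambda}$ (with a uniform price $p$, queue $1$ absorbs demand up to its local equilibrium while lower queues are choked off once $p$ exceeds their $m_i$), and serving only queue $1$ is itself implementable by a uniform price, so the restriction to uniform prices cannot overturn the conclusion.

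The main obstacle I anticipate is controlling the remainder term \emph{uniformly} over all feasible demand vectors and simultaneously over small $\kappa$ — in particular, ensuring that $\overline{D}_i(\lambda_i) \to \overline{D}_i(0)$ and $\overline{C}_i(\lambda_i) \to \overline{C}_i(0)$ at a rate that does not degrade as $N$ queues share the budget. This requires a genuine continuity (ideally Lipschitz or at least modulus-of-continuity) statement for the expected steady-state delay functions $\overline{D}_i, \overline{C}_i$ near zero demand, which should follow from the queueing assumptions but needs to be invoked carefully; a secondary subtlety is verifying that the threshold $\underline{\kappa}$ can be taken strictly positive and finite (it is finite because for large $\kappa$ local congestion in queue $1$ eventually makes spreading demand profitable — otherwise $\underline\kappa = +\infty$, contradicting $\underline\kappa \in (0,+\infty)$; this direction may need a separate short argument that serving only queue $1$ is \emph{not} optimal once $\kappa$ is large enough, e.g. because $\lambda_1 p_1(\lambda_1)$ is eventually decreasing in $\lambda_1$ as delay costs explode). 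Handling the boundary case where queue $1$'s own local demand saturates below $\kappa$ is the one place the argument needs the most care.
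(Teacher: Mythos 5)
Your strategy---compare the revenue of an arbitrary feasible split $\boldsymbol{\lambda}$ with $\sum_i\lambda_i=\kappa$ against the all-in-queue-$1$ allocation, exploiting that the per-unit value $m_i=V_i'(0)\overline{D}_i(0)-\overline{C}_i(0)$ is strictly largest for $i=1$---rests on the same underlying idea as the paper's proof, which instead constructs a capacity small enough that queue $1$'s equilibrium price at $\lambda_1=\kappa$ still exceeds every other queue's choke price $m_i$, $i\ge 2$, and separately checks that $\kappa\mapsto\kappa\,[V_1'(\kappa)\overline{D}_1(\kappa)-\overline{C}_1(\kappa)]$ is increasing near $0$. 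Your direct comparison is, if anything, more explicit about why lowering $p_2$ to attract some demand cannot help in the relative-price case, where the paper's ``no customers will be willing to join queues $2,\dots,N$'' is literally an argument about a fixed (uniform) price. Two of your side remarks are non-issues: the proposition only asserts existence of \emph{some} finite positive threshold, so you do not need to show that serving only queue $1$ fails for large $\kappa$; and implementing $\lambda_1=\kappa$, $\lambda_j=0$ by a uniform price just requires $p_1(\kappa)>m_2$, which holds for small $\kappa$ by continuity since $p_1(\kappa)\to m_1>m_2$.

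There is, however, one step that does not close as written. You bound the revenue difference by a leading term $\sum_i\lambda_i(m_1-m_i)\ge(\kappa-\lambda_1)(m_1-m_2)$ plus an $o(\kappa)$ remainder and then pick $\underline{\kappa}$ to make the sum positive. But the leading term is of order $\kappa-\lambda_1$, not $\kappa$: for an allocation that diverts only a sliver $\varepsilon=\kappa-\lambda_1\ll\kappa$ to queue $2$, the gain is $\varepsilon(m_1-m_2)$ while your remainder control is only $o(\kappa)$, which can swamp it with the wrong sign; no single choice of $\underline{\kappa}$ repairs this uniformly over feasible $\boldsymbol{\lambda}$. The fix is to account for the error per unit of displaced mass rather than globally. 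Writing $R_i(\lambda)=\lambda\bigl(V_i'(\lambda)\overline{D}_i(\lambda)-\overline{C}_i(\lambda)\bigr)$, the derivative $R_i'$ is continuous with $R_i'(0)=m_i$, so one can choose $\underline{\kappa}>0$ with $\inf_{s\in[0,\underline{\kappa}]}R_1'(s)>\max_{i\ge 2}\sup_{s\in[0,\underline{\kappa}]}R_i'(s)$. Then $R_1(\kappa)-R_1(\lambda_1)=\int_{\lambda_1}^{\kappa}R_1'(s)\,ds\ge(\kappa-\lambda_1)\inf R_1'>\sum_{i\ge 2}\lambda_i\sup R_i'\ge\sum_{i\ge 2}R_i(\lambda_i)$ whenever $\lambda_1<\kappa$, which delivers strict dominance of the queue-$1$-only allocation for every feasible split and completes your argument; this also subsumes your worry about uniform continuity of $\overline{D}_i,\overline{C}_i$ near zero, since only continuity of $R_i'$ on a compact neighborhood of $0$ is needed.
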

\begin{proof} The idea of the proof is to construct a small capacity (or equivalently, a large enough level of congestion and price for queue 1) so that no customers will be willing to join queues $2,\dots, N$ and net revenue from queue one is increasing in its allocated capacity. In these conditions, allocating all capacity to queue one is revenue maximizing.
Since $V_{1}^{'}(0)\overline{D}_1(0)-\overline{C}_1(0)>V_{2}^{'}(0)\overline{D}_2(0)-\overline{C}_2(0)>\dots>V_{N}^{'}(0)\overline{D}_N(0)-\overline{C}_N(0)$ without loss of generality, and $V_{i}^{'}(\lambda)\overline{D}_{i}(\lambda)-\overline{C}_{i}(\lambda)$ is continuously decreasing in $\lambda$ for all $i$, there exists $\kappa_1 \in (0,+\infty)$ such that $V_{1}^{'}(\kappa_1)\overline{D}_{1}(\kappa_1)-\overline{C}_1(\kappa_1)>V_{2}^{'}(0)\overline{D}_{2}(0)-\overline{C}_2(0)>\dots>V_{N}^{'}(0)\overline{D}_{N}(0)-\overline{C}_N(0)$. Denote gross revenue from queue 1 absent any delays as $R_1(\lambda_1)=\lambda_1V^{'}_{1}(\lambda_1)$, the marginal net revenue from this queue is $R_{1}^{'}(\lambda_1)\overline{D}_{1}(\lambda_1)-\overline{C}_1(\lambda_1)+\lambda_1V^{'}_{1}(\lambda_1)\overline{D}_{1}^{'}(\lambda_1)-\lambda_1\overline{C}_{1}^{'}(\lambda_1)$. Evaluated at $\lambda_1=0$ yields $R_{1}^{'}(0)\overline{D}_{1}(0)-\overline{C}_1(0)=V_{1}^{'}(0)\overline{D}_{1}(0)-\overline{C}_1(0)>0$, therefore, by continuity, the marginal net revenue from queue 1 is increasing in a neighborhood of 0. That is, $\exists\text{ } 0 <\underline{\kappa}\leq\kappa_1$ such that $V_{1}^{'}(\underline{\kappa})\overline{D}_{1}(\underline{\kappa})-\overline{C}_1(\underline{\kappa})>V_{2}^{'}(0)\overline{D}_{2}(0)-\overline{C}_2(0)>\dots>V_{N}^{'}(0)\overline{D}_{N}(0)-\overline{C}_N(0)$ and the net revenue function $\kappa \mapsto \kappa [V_{1}^{'}(\kappa)\overline{D}_{1}(\kappa)-\overline{C}_1(\kappa)]$ is increasing in $[0,\underline{\kappa}]$. In both the relative price and uniform price case, for capacity below $\underline{\kappa}$ it is revenue maximizing to allocate all capacity to queue 1, since at those capacities and prices, no customers will be willing to join queues $2,\dots,N$ and the total capacity is used since net revenue from queue one is increasing in this segment.
\end{proof}
This proposition highlights a potential inefficiency in revenue maximization: when capacity is limited, the system tends to favor the queue with the highest-paying users, potentially excluding transactions from other queues that could contribute positively to social welfare.
\subsection{Welfare Maximization}\label{subsec: welfare_q1}
Next, I consider the objective of maximizing social welfare, which takes into account the total net benefit to all users across all queues rather than focusing solely on revenue. The protocol's social welfare over all queues\footnote{The general problem is \begin{align} \label{eq: welfare_S}
SW =  \max_{\mathcal{S},\lambda_i \in [0,\Lambda_i)^N,i\in\mathcal{S}} \sum_{i\in\mathcal{S{}}} V_i(\lambda_i)\cdot \overline{D}_{i}(\lambda_i) - \lambda_i \cdot \overline{C}_{i}(\lambda_i).
\end{align}}
\begin{align} \label{eq: welfare}
SW =  \max_{\lambda_i \in [0,\Lambda_i)^N} \sum_{i=1}^{N} V_i(\lambda_i)\cdot \overline{D}_{i}(\lambda_i) - \lambda_i \cdot \overline{C}_{i}(\lambda_i)
\end{align}

Subject to the local equilibrium condition \eqref{eq: eqm demand} and the global inclusion constraint \eqref{eq: global_capacity}. The protocol's social welfare is defined as the sum of the expected net values to all users across all queues, accounting for delay costs. Under welfare maximization, setting optimal relative prices $(p_1,\dots,p_N) \in \mathbb{R}_{+}^{N}$ is equivalent to a planner choosing the demand rates  $\lambda_i \in [0,\Lambda_i)^N,i\in\mathcal{S}$ directly subject to local equilibrium conditions \eqref{eq: eqm demand} in all served queues and the global capacity constraint\eqref{eq: global_capacity}. The following proposition shows that the relative price social optimum generically serves all queues.
\begin{proposition}\label{prop: welfare_S}
    Suppose that the discount rate and linear delay cost functions are so that the net utility function from queue $i$, that is $W_i \equiv \lambda_i \mapsto V_i(\lambda_i)\cdot \overline{D}_i(\lambda_i) - \lambda_i \cdot \overline{C}_i(\lambda_i)$ is strictly concave, and $\exists \nu>0$  such that $W^{'}_{i}(0) > \nu$ for all $i$, and $\sum_{i=1}^{N} (W'_i)^{-1}(\nu) = \kappa$ then in the relative price social optimum, capacity is allocated in all queues, $\mathcal{S}=\{1,\dots,N\}$.
\end{proposition}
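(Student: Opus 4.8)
The plan is to set up the welfare maximization as a concave program and show that the Karush--Kuhn--Tucker conditions force an interior solution in every queue. Concretely, I would treat the problem $\max_{\boldsymbol{\lambda}} \sum_{i=1}^N W_i(\lambda_i)$ subject to $\sum_i \lambda_i \le \kappa$ and $\lambda_i \ge 0$, where the objective is strictly concave by hypothesis (each $W_i$ strictly concave, and separability means the sum is strictly concave in $\boldsymbol{\lambda}$). Since the feasible set is compact, convex, and nonempty, a unique maximizer exists; strict concavity also guarantees that the KKT conditions are necessary and sufficient. Writing $\mu \ge 0$ for the multiplier on the capacity constraint and $\eta_i \ge 0$ for the multiplier on $\lambda_i \ge 0$, stationarity gives $W_i'(\lambda_i) = \mu - \eta_i$ for each $i$, with complementary slackness $\eta_i \lambda_i = 0$.

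Next I would rule out $\lambda_i = 0$ for any $i$. Suppose $\lambda_j = 0$ at the optimum; then $W_j'(0) = \mu - \eta_j \le \mu$. I want to contradict this by showing $\mu < W_j'(0)$, which would follow if I can pin down $\mu$. Here is where the third hypothesis enters: the value $\nu$ with $W_i'(0) > \nu$ for all $i$ and $\sum_i (W_i')^{-1}(\nu) = \kappa$ is precisely engineered so that the candidate allocation $\lambda_i^\star = (W_i')^{-1}(\nu)$ is feasible with the capacity constraint binding, every $\lambda_i^\star > 0$ (since $\nu < W_i'(0)$ and $W_i'$ is strictly decreasing, so $(W_i')^{-1}(\nu) > 0$), and it satisfies stationarity with $\mu = \nu$ and all $\eta_i = 0$. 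By sufficiency of the KKT conditions under strict concavity, $\boldsymbol{\lambda}^\star$ is the unique optimum, and it has $\mathcal{S} = \{1,\dots,N\}$. I would also note that the capacity constraint is indeed binding there (consistent with the paper's standing assumption that $\kappa$ binds), so $\mu = \nu > 0$ is legitimate.

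Finally I would connect this interior demand allocation back to the statement's language about the ``relative price social optimum'': by the local equilibrium condition \eqref{eq: eqm demand}, each $\lambda_i^\star > 0$ corresponds to a well-defined finite price $p_i(\lambda_i^\star)$, and since all $\lambda_i^\star \in (0, \Lambda_i)$ the prices are in $\mathbb{R}_+$; hence implementing $\boldsymbol{\lambda}^\star$ via relative prices serves all queues. The main obstacle I anticipate is not any single hard inequality but rather being careful about the logical structure: one must verify that the hypotheses (strict concavity plus the existence of $\nu$ with the summation identity) are exactly what make the ``uniform marginal-welfare'' allocation both feasible and KKT-optimal, and one must make sure the argument genuinely rules out boundary solutions rather than merely exhibiting an interior critical point — strict concavity is what closes that gap, since it upgrades the exhibited KKT point to the global maximum. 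A secondary subtlety is the boundary behavior as $\lambda_i \to \Lambda_i$; I would observe that $\lambda_i^\star < \Lambda_i$ is automatic because $W_i$ need not even be defined at $\Lambda_i$ (recall $V_i$ is only assumed concave for $\lambda_i < \Lambda_i$), and $\nu > 0$ keeps the solution bounded away from that edge.
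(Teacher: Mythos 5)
Your proof is correct, and it uses the same basic machinery as the paper (the KKT conditions for the separable concave program $\max_{\boldsymbol{\lambda}}\sum_i W_i(\lambda_i)$ subject to $\sum_i\lambda_i\le\kappa$, $\lambda_i\ge 0$), but it runs the argument in the opposite direction. The paper takes an arbitrary optimizer $\lambda^*$, applies the \emph{necessary} KKT conditions, supposes some $\lambda_j^*=0$, deduces $\mu>\nu$ and hence $\lambda_i^*<(W_i')^{-1}(\nu)$ for every served queue, and concludes $\sum_i\lambda_i^*<\sum_i(W_i')^{-1}(\nu)=\kappa$, so capacity is slack and welfare can be raised by activating queue $j$ --- a contradiction. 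You instead exhibit the candidate $\lambda_i^\star=(W_i')^{-1}(\nu)>0$, verify primal feasibility (the sum equals $\kappa$ exactly by hypothesis), stationarity and complementary slackness with $\mu=\nu$ and $\eta_i=0$, and invoke \emph{sufficiency} of KKT for concave objectives over affine constraints to conclude this is the (unique, by strict concavity) global optimum. Your route buys two things the paper's does not: it identifies the welfare-maximizing allocation and the shadow price $\mu=\nu$ in closed form relative to the hypotheses, and it needs no constraint qualification since sufficiency of KKT for concave programs holds unconditionally, whereas the paper's necessity step tacitly relies on one (harmless here, as the constraints are affine). The paper's contradiction argument is slightly more robust in one respect: it would still show all queues are served under the weaker hypothesis $\sum_i(W_i')^{-1}(\nu)\ge\kappa$, since only the inequality $\sum_i\lambda_i^*<\kappa$ is needed for the contradiction, while your construction uses the equality to make $\lambda^\star$ exactly feasible. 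Your closing remarks on implementability via \eqref{eq: eqm demand} and on the boundary $\lambda_i\to\Lambda_i$ are sensible additions that the paper leaves implicit.
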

\begin{proof}
Since each $W_i$ is strictly concave, their sum is strictly concave. Let $\lambda^*$ be an optimal solution to the problem. By the Karush–Kuhn–Tucker conditions, $\exists \mu \geq 0$ such that
      $W'_i(\lambda_i^*) = \mu \text{ if } \lambda_i^* > 0$ and $W'_i(\lambda_i^*) \leq \mu \text{ if } \lambda_i^* = 0$
      Suppose, for contradiction, that $\exists j$ such that $\lambda_j^* = 0$. Then, $W'_j(0) \leq \mu$. But we know that $W'_j(0) > \nu$, therefore, $\mu > \nu$. There exists at least one index $i$ so that $\lambda_i^* > 0$, otherwise total capacity would be zero. For all $i$ where $\lambda_i^* > 0$, we have $W'_i(\lambda_i^*) = \mu > \nu$. Since $W_i$ is strictly concave, $W'_i$ is strictly decreasing. Therefore,
      $\lambda_i^* < (W'_i)^{-1}(\nu) \text{ for all } i \text{ where } \lambda_i^* > 0$. This implies that
      $\sum_{i=1}^N \lambda_i^* < \sum_{i=1}^N (W'_i)^{-1}(\nu) = \kappa$. But this contradicts the optimality of $\lambda^*$ because we can increase the objective function by increasing $\lambda_j^*$ slightly while still satisfying the constraint. Therefore, our assumption of the existence of $j$ is a contradiction, and we conclude that $\lambda_i^* > 0$ for all $i$. That is, all queues are allocated non-zero capacity.
\end{proof}

This proposition indicates that, under welfare maximization, it is optimal to serve all queues, distributing capacity across them in a way that balances the marginal social welfare contributions. This contrasts with the revenue-maximizing allocation, which may exclude some queues to maximize revenue.
\subsection{Welfare Maximizing Relative Pricing}\label{subsec: welfare_q2}
Having established that welfare maximization leads to capacity allocation across all queues, we now derive the welfare-maximizing relative prices that support this allocation under the conditions of Proposition \ref{prop: welfare_S}. Let $\mu$ denote the Lagrange multiplier associated with the global capacity constraint \eqref{eq: global_capacity}. Economically, $\mu$ represents the shadow price of including an additional transaction in the system; it reflects the marginal social cost of capacity constraints. The following propositions link the socially optimal prices in each queue to this shadow price and demand characteristics.
\begin{proposition}
    The socially optimal relative prices are given by
    \begin{align}\label{eq: parallel}
    p_i  =   - V_i(\lambda_i) \overline{D}_{i}^{'}(\lambda_i)+ \lambda_i \overline{C}_{i}^{'}(\lambda_i)+\mu
\end{align}

\end{proposition}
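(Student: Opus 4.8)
The plan is to derive the pricing formula \eqref{eq: parallel} directly from the first-order conditions of the welfare maximization problem \eqref{eq: welfare} and then show that these prices, plugged into the local equilibrium condition \eqref{eq: eqm demand}, implement the socially optimal demand rates $\lambda_i^*$. First I would write the Lagrangian of the planner's problem: with objective $\sum_i W_i(\lambda_i) = \sum_i \bigl[V_i(\lambda_i)\overline{D}_i(\lambda_i) - \lambda_i \overline{C}_i(\lambda_i)\bigr]$ and the binding constraint $\sum_i \lambda_i = \kappa$, the Lagrangian is $\mathcal{L} = \sum_i W_i(\lambda_i) - \mu\bigl(\sum_i \lambda_i - \kappa\bigr)$. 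Under the hypotheses of Proposition \ref{prop: welfare_S} every queue is served, so the first-order condition holds with equality for each $i$: $W_i'(\lambda_i) = \mu$. Expanding $W_i'$ via the product rule gives $V_i'(\lambda_i)\overline{D}_i(\lambda_i) + V_i(\lambda_i)\overline{D}_i'(\lambda_i) - \overline{C}_i(\lambda_i) - \lambda_i \overline{C}_i'(\lambda_i) = \mu$.

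The second step is to connect this to the price. The local equilibrium condition \eqref{eq: eqm demand} states that at demand rate $\lambda_i$ the decentralized price is $p_i(\lambda_i) = V_i'(\lambda_i)\overline{D}_i(\lambda_i) - \overline{C}_i(\lambda_i)$. Substituting this into the rearranged first-order condition $V_i'(\lambda_i)\overline{D}_i(\lambda_i) - \overline{C}_i(\lambda_i) = \mu - V_i(\lambda_i)\overline{D}_i'(\lambda_i) + \lambda_i \overline{C}_i'(\lambda_i)$ immediately yields $p_i = -V_i(\lambda_i)\overline{D}_i'(\lambda_i) + \lambda_i \overline{C}_i'(\lambda_i) + \mu$, which is exactly \eqref{eq: parallel}. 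I would also note the Pigouvian interpretation the paper emphasizes: $-V_i(\lambda_i)\overline{D}_i'(\lambda_i) \geq 0$ and $\lambda_i \overline{C}_i'(\lambda_i) \geq 0$ are the local congestion externality terms (since $\overline{D}_i' < 0$ and $\overline{C}_i' > 0$, more demand worsens delay for everyone in queue $i$), while $\mu$ is the global externality — the shadow cost of consuming one unit of the shared capacity $\kappa$.

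I would close by confirming that this price is well-defined and implementable: strict concavity of each $W_i$ (assumed in Proposition \ref{prop: welfare_S}) guarantees $W_i'$ is strictly decreasing, hence the KKT system has a unique solution $\lambda^*$, and the map $p_i(\lambda_i)$ from \eqref{eq: eqm demand} is the relevant one-to-one correspondence on the active region, so posting $p_i$ as in \eqref{eq: parallel} induces exactly $\lambda_i^*$ in the local equilibrium. The main obstacle — really the only subtlety — is bookkeeping the relationship between the two "price" expressions: one must be careful that \eqref{eq: eqm demand} defines $p_i$ as the marginal \emph{user's} net value $V_i'(\lambda_i)\overline{D}_i(\lambda_i) - \overline{C}_i(\lambda_i)$ (a private marginal quantity), whereas the planner's FOC involves the \emph{average} value terms $V_i(\lambda_i)\overline{D}_i'(\lambda_i)$ and total cost derivative $\lambda_i\overline{C}_i'(\lambda_i)$ (social marginal quantities); the wedge between them is precisely the externality correction, and getting the algebra of the product rule to line up with the equilibrium condition is the one place to proceed with care. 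Everything else is substitution.
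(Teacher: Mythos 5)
Your proof is correct and follows exactly the route the paper indicates: take the first-order condition $W_i'(\lambda_i)=\mu$ of the Lagrangian for the welfare problem, expand by the product rule, and substitute the local equilibrium price $p_i = V_i'(\lambda_i)\overline{D}_i(\lambda_i)-\overline{C}_i(\lambda_i)$ from \eqref{eq: eqm demand}. The paper compresses this into a single sentence, so your version is simply a fuller write-up of the same argument, including the correct Pigouvian reading of the three terms.
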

This proposition emerges from the first order condition for $\lambda_i$ and replacing $p_i$ from \eqref{eq: eqm demand}. This expression shows that the socially optimal price in queue $i$ includes three components. First, the local delay externality $- V_i(\lambda_i) \cdot \overline{D}_i'(\lambda_i)$ captures the negative impact of increased demand on the expected delay discount. As $\lambda_i$ increases, the expected delay increases (since the system becomes more congested), reducing the net value for all users in queue $i$. Second, the local additive delay cost $\lambda_i \cdot \overline{C}_i'(\lambda_i)$ represents the additional additive delay costs incurred due to increased demand. Third, the global capacity externality $\mu$ reflects the marginal cost of consuming limited capacity that could have been used by other queues.

At the socially optimal prices, the marginal user's expected net value is equal to the total externality they impose on the system. This ensures that users internalize the full social cost of their participation, leading to an efficient allocation of resources.

To gain further insights, I specialize the model to a setting where the time between arrivals is exponentially distributed, and the execution times for each user also follow an exponential distribution. Each market has size $\Lambda_i$ each with a different isoelastic marginal value function $V^{'}_i(\lambda_i)=(\lambda_i/\Lambda_i)^{-1/\varepsilon_i}$ where $\varepsilon_i>1$ represents demand elasticity for queue/resource $i$. In this setting, $V_i(\lambda_i)=\frac{(\lambda_i/\Lambda_i)^{1-1/\varepsilon_i}}{1-1/\varepsilon_i}$.

Assuming that the delay discount function is exponential, $D(t)=e^{-d t}$ and the additive delay cost is linear $C(t)=c\times t$ where $c,d>0$, we have  (see Appendix \ref{sec:appendix} for detailed derivations)
\begin{align}
    \overline{C}_{i}(\lambda_i) & =  \frac{c}{1-\lambda_i} \notag \\
    \overline{D}_{i}(\lambda_i) & =  \frac{1-\lambda_i}{1+d-\lambda_i}
\end{align}
\paragraph{Approximation under High Parallelization.}

Suppose that the demand rates $\lambda_i$ and $\lambda_j$ are small relative to 1 and the discount rate $d$, reflecting a high degree of parallelization (i.e., the system can process many transactions concurrently). Under this assumption, we can approximate the socially optimal relative prices.
\begin{corollary}\label{cor: approx_1} Under the above assumptions, the ratio of the socially optimal prices in queues $i$ and $j$ is approximately
    \begin{align}
\frac{p_i}{p_j} &\approx \frac{\frac{(\lambda_i/\Lambda_i)^{1-1/\varepsilon_i}}{1-1/\varepsilon_i} \cdot \frac{d}{(1+d)^2} + c \lambda_i + \mu}
{\frac{(\lambda_j/\Lambda_j)^{1-1/\varepsilon_j}}{1-1/\varepsilon_j} \cdot \frac{d}{(1+d)^2} + c \lambda_j + \mu}
\end{align}
\end{corollary}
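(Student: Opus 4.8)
The plan is to start from the socially optimal price formula of the preceding proposition, $p_i = -V_i(\lambda_i)\overline{D}_i'(\lambda_i) + \lambda_i\overline{C}_i'(\lambda_i) + \mu$, specialize it to the isoelastic–exponential environment, and then linearize in the small quantities $\lambda_i$. First I would compute the two derivatives that appear. From $\overline{C}_i(\lambda_i) = c/(1-\lambda_i)$ one gets $\overline{C}_i'(\lambda_i) = c/(1-\lambda_i)^2$, so the additive-cost term is $\lambda_i\overline{C}_i'(\lambda_i) = c\lambda_i/(1-\lambda_i)^2$. From $\overline{D}_i(\lambda_i) = (1-\lambda_i)/(1+d-\lambda_i)$ the quotient rule gives $\overline{D}_i'(\lambda_i) = -d/(1+d-\lambda_i)^2$, so the delay-externality term is $-V_i(\lambda_i)\overline{D}_i'(\lambda_i) = V_i(\lambda_i)\,d/(1+d-\lambda_i)^2$. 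Substituting $V_i(\lambda_i) = \frac{(\lambda_i/\Lambda_i)^{1-1/\varepsilon_i}}{1-1/\varepsilon_i}$ then yields an exact closed form for $p_i$.

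Next I would invoke the high-parallelization hypothesis, $\lambda_i \ll 1$ and $\lambda_i \ll d$. Under it, $(1-\lambda_i)^2 = 1 + O(\lambda_i)$ and $(1+d-\lambda_i)^2 = (1+d)^2\bigl(1 + O(\lambda_i/(1+d))\bigr)$, so to leading order $\lambda_i\overline{C}_i'(\lambda_i) \approx c\lambda_i$ and $-V_i(\lambda_i)\overline{D}_i'(\lambda_i) \approx \frac{(\lambda_i/\Lambda_i)^{1-1/\varepsilon_i}}{1-1/\varepsilon_i}\cdot\frac{d}{(1+d)^2}$. Collecting terms gives $p_i \approx \frac{(\lambda_i/\Lambda_i)^{1-1/\varepsilon_i}}{1-1/\varepsilon_i}\cdot\frac{d}{(1+d)^2} + c\lambda_i + \mu$, and forming the ratio $p_i/p_j$ — in which the common shadow price $\mu$, pinned down by the binding capacity constraint, carries through both numerator and denominator unchanged — produces exactly the claimed expression.

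The computation itself is routine; the only real care needed is bookkeeping of the approximation. I would make explicit that the discarded terms are of relative order $O(\lambda_i)$ in the first component and of relative order $O(\lambda_i/(1+d))$ in the second, so the error in each $p_i$ is $O(\lambda_i^2) + O\bigl(\lambda_i\,d/(1+d)^2\bigr)$, uniformly over queues under the stated smallness conditions, and the ratio inherits the same accuracy provided $p_j$ is bounded away from zero — which holds since $\mu>0$. The one subtlety worth flagging is that $\mu$ is itself endogenous, depending on all the $\lambda_i$ through $\sum_i\lambda_i=\kappa$ and the first-order conditions of Proposition \ref{prop: welfare_S}; but since the corollary conditions on the optimal profile $(\lambda_i)$ and its associated multiplier, $\mu$ may simply be treated as a fixed positive constant and no further argument about it is needed. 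That is the entire argument.
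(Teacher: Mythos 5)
Your proposal is correct and follows essentially the same route as the paper's proof: compute $\overline{D}_i'(\lambda_i)=-d/(1+d-\lambda_i)^2$ and $\overline{C}_i'(\lambda_i)=c/(1-\lambda_i)^2$, substitute into the pricing formula of the preceding proposition, and then replace $(1+d-\lambda_i)^2$ by $(1+d)^2$ and $(1-\lambda_i)^2$ by $1$ under the high-parallelization hypothesis before taking the ratio. Your added bookkeeping of the error orders and the remark that $\mu$ can be treated as fixed at the optimum go slightly beyond what the paper records, but do not change the argument.
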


\begin{proof}
    First, we compute the derivatives: 
$
V'_i(\lambda_i) = \Lambda_i \lambda_i^{-1/\varepsilon_i},
\overline{D}'(\lambda_i) = -\frac{d}{(1+d-\lambda_i)^2},
\overline{C}'(\lambda_i) = \frac{c}{(1-\lambda_i)^2}$. Substitute into the equation for $p_i$,
$p_i =  \frac{(\lambda_i/\Lambda_i)^{1-1/\varepsilon_i}}{1-1/\varepsilon_i} \cdot \left(\frac{d}{(1+d-\lambda_i)^2}\right) + \lambda_i \cdot \frac{c}{(1-\lambda_i)^2} + \mu
$. Consider the ratio $
{p_i}/{p_j}$. Assuming $\lambda_i$ and $\lambda_j$ are small compared to 1 and $d$ we have
$
(1+d-\lambda_i)^2 \approx (1+d)^2,
(1-\lambda_i)^2 \approx 1
$, replacing in the expression for relative prices yields $
\frac{p_i}{p_j} \approx \frac{\frac{(\lambda_i/\Lambda_i)^{1-1/\varepsilon_i}}{1-1/\varepsilon_i} \cdot \frac{d}{(1+d)^2} + c \lambda_i + \mu}
{\frac{(\lambda_j/\Lambda_j)^{1-1/\varepsilon_j}}{1-1/\varepsilon_j} \cdot \frac{d}{(1+d)^2} + c \lambda_j + \mu}.
$
\end{proof}

The approximate price ratio reveals how the optimal prices depend on queue-specific characteristics such as market size $\Lambda_i$, demand elasticity $\varepsilon_i$, and demand rates $\lambda_i$. When $\mu$ is small relative to the other terms (i.e., when local congestion effects dominate global capacity constraints), the price ratio is primarily determined by these queue-specific factors. As $\mu$ increases (i.e., when global congestion becomes more significant), its effect is to push the price ratio closer to 1, reducing price differentiation across queues.

\begin{corollary}
Suppose, in addition to the assumptions of Corollary \ref{cor: approx_1}, that local congestion dominates global congestion ($\mu$ is negligible compared to $p_i$ and $p_j$), and demand is perfectly elastic ($\varepsilon_i, \varepsilon_j \to \infty$). Then, the price ratio is simplified to
    \begin{align}
\frac{p_i}{p_j} &\approx \frac{ \lambda_i}
{ \lambda_j}\cdot\frac{\Lambda_j }
{\Lambda_i }
\end{align}
\end{corollary}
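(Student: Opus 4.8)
The plan is to start from the approximate price ratio established in Corollary \ref{cor: approx_1},
\[
\frac{p_i}{p_j} \approx \frac{\frac{(\lambda_i/\Lambda_i)^{1-1/\varepsilon_i}}{1-1/\varepsilon_i} \cdot \frac{d}{(1+d)^2} + c \lambda_i + \mu}{\frac{(\lambda_j/\Lambda_j)^{1-1/\varepsilon_j}}{1-1/\varepsilon_j} \cdot \frac{d}{(1+d)^2} + c \lambda_j + \mu},
\]
and take the two stated limits in turn. First, since $\mu$ is assumed negligible compared to $p_i$ and $p_j$, I would drop the additive $\mu$ in both numerator and denominator, reducing the ratio to $\bigl(\tfrac{(\lambda_i/\Lambda_i)^{1-1/\varepsilon_i}}{1-1/\varepsilon_i}\tfrac{d}{(1+d)^2} + c\lambda_i\bigr)\big/\bigl(\tfrac{(\lambda_j/\Lambda_j)^{1-1/\varepsilon_j}}{1-1/\varepsilon_j}\tfrac{d}{(1+d)^2} + c\lambda_j\bigr)$.

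Next I would take $\varepsilon_i,\varepsilon_j \to \infty$. In this limit $1/\varepsilon_i \to 0$, so the exponent $1-1/\varepsilon_i \to 1$ and the prefactor $1/(1-1/\varepsilon_i) \to 1$; hence $\frac{(\lambda_i/\Lambda_i)^{1-1/\varepsilon_i}}{1-1/\varepsilon_i} \to \lambda_i/\Lambda_i$, and similarly for $j$. Substituting gives
\[
\frac{p_i}{p_j} \approx \frac{\frac{\lambda_i}{\Lambda_i}\cdot\frac{d}{(1+d)^2} + c\lambda_i}{\frac{\lambda_j}{\Lambda_j}\cdot\frac{d}{(1+d)^2} + c\lambda_j}
= \frac{\lambda_i\bigl(\frac{1}{\Lambda_i}\cdot\frac{d}{(1+d)^2} + c\bigr)}{\lambda_j\bigl(\frac{1}{\Lambda_j}\cdot\frac{d}{(1+d)^2} + c\bigr)}.
\]
The last simplification—from this expression down to the claimed $\frac{\lambda_i}{\lambda_j}\cdot\frac{\Lambda_j}{\Lambda_i}$—requires that the $c$ term inside each parenthesis be negligible relative to the $\frac{d}{\Lambda_i(1+d)^2}$ term, i.e. $c \ll \frac{d}{\Lambda_i(1+d)^2}$; equivalently the multiplicative delay-discount externality dominates the additive linear-delay-cost externality in each queue. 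Under that reading the parentheses collapse to $\frac{1}{\Lambda_i}\cdot\frac{d}{(1+d)^2}$ and $\frac{1}{\Lambda_j}\cdot\frac{d}{(1+d)^2}$, whose common factor $\frac{d}{(1+d)^2}$ cancels, leaving $\frac{p_i}{p_j} \approx \frac{\lambda_i/\Lambda_i}{\lambda_j/\Lambda_j} = \frac{\lambda_i}{\lambda_j}\cdot\frac{\Lambda_j}{\Lambda_i}$, as claimed.

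The main obstacle is that the stated hypotheses—$\mu$ negligible and $\varepsilon_i,\varepsilon_j\to\infty$—do not by themselves kill the additive $c\lambda_i$ and $c\lambda_j$ terms, so strictly one also needs the local delay-discount externality to dominate the additive delay cost. I would either make this explicit as an added hypothesis (natural, since the corollary is an asymptotic "approximately proportional" statement in the same spirit as Corollary \ref{cor: approx_1}), or observe that when $c$ is of the same order as the discount term the conclusion holds only up to the queue-independent-looking but $\Lambda$-dependent correction factor $\bigl(\tfrac{1}{\Lambda_i}\tfrac{d}{(1+d)^2}+c\bigr)/\bigl(\tfrac{1}{\Lambda_j}\tfrac{d}{(1+d)^2}+c\bigr)$; I would flag in the text that the clean ratio $\tfrac{\lambda_i\Lambda_j}{\lambda_j\Lambda_i}$ is the leading-order behavior when the congestion externality is dominated by its multiplicative (delay-discount) component. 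Everything else is routine limit-taking and cancellation, with no convergence subtleties since all quantities are continuous and the denominators stay bounded away from zero under the maintained assumption that $\lambda_i,\lambda_j$ are positive and small.
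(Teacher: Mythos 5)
Your derivation is correct and follows the only natural route: the paper itself supplies no proof for this corollary (it is stated and then followed directly by interpretive discussion), so the intended argument is exactly what you wrote—drop $\mu$, send $\varepsilon_i,\varepsilon_j\to\infty$ so that $\frac{(\lambda_i/\Lambda_i)^{1-1/\varepsilon_i}}{1-1/\varepsilon_i}\to\lambda_i/\Lambda_i$, and cancel. Your flagged obstacle is a genuine gap in the \emph{statement}, not in your proof: after the two stated limits the ratio is $\frac{\lambda_i}{\lambda_j}\cdot\bigl(\frac{1}{\Lambda_i}\frac{d}{(1+d)^2}+c\bigr)\big/\bigl(\frac{1}{\Lambda_j}\frac{d}{(1+d)^2}+c\bigr)$, and since both the discount term and the additive-cost term are first order in $\lambda_i$, the maintained smallness of $\lambda_i,\lambda_j$ does nothing to separate them. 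The clean conclusion $\frac{\lambda_i}{\lambda_j}\cdot\frac{\Lambda_j}{\Lambda_i}$ therefore requires the additional hypothesis you identify ($c\ll\frac{d}{\Lambda_i(1+d)^2}$ and likewise for $j$, or simply $c=0$), and the paper's subsequent claim that ``optimal prices are proportional to $\lambda_i/\Lambda_i$'' implicitly assumes it. Making that hypothesis explicit, as you propose, is the right fix; the rest of your limit-taking is routine and matches what the paper must intend.
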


This limit expression offers several insights. First, in the case of perfectly elastic demand, the optimal prices are proportional to the ratio of demand rates normalized by market sizes ($\lambda_i / \Lambda_i$). This suggests that setting prices based on the relative demand intensity in each queue approximates the welfare-maximizing solution. Second, as the market size $\Lambda_i$ for a congested queue decreases, the optimal price for that queue diverges from a uniform price, reflecting the higher marginal value of capacity in smaller markets.
\section{Conclusion}\label{sec: conclusion_q}
In this paper, I investigated posted-price Parallel Execution Fee Mechanisms  within a capacity-constrained blockchain system characterized by multiple queues or local fee markets. My model captures the essential features of parallel execution in blockchain networks, where transactions may access different resources or contracts and can be processed concurrently. A key aspect of our analysis was the global inclusion constraint imposed by the consensus mechanism. This constraint necessitates that all transactions, regardless of their queue, must be considered collectively for inclusion.

The analysis reveals several key insights. When the objective is to maximize revenue, especially under limited capacity, the system tends to allocate all capacity to the queue with users willing to pay the highest fees.
 In contrast, when the objective is to maximize social welfare, the optimal allocation generally involves serving all queues. I found that the optimal relative pricing across different queues depends on several factors, including market size, demand elasticity, and the balance between local and global congestion. In settings where demand elasticity is high, and local congestion effects dominate, pricing individual queues proportional to demand relative to market size is approximately welfare-maximizing.

The findings suggest that implementing local fee markets within such blockchains can improve overall system efficiency. By defining local values for each state, contract, or object and employing an adaptive base fee mechanism for inclusion, transactions can be assigned to queues with different relative prices. As blockchain technologies evolve towards more complex architectures, such as parallel execution, Directed Acyclic Graph (DAG)-based systems, and multiple concurrent proposers, this paper provides valuable insights for protocol designers.

While this study provides a foundational model for efficient parallel execution fee mechanisms, I have abstracted from several aspects of transaction execution on blockchains.

One important extension is the study of optimal local priority auctions. In such a setting, customers could participate in a two-stage bidding process for entering queues in a system with multiple service points. Initially, users might bid for priority in a global queue, reflecting the capacity constraints of the consensus mechanism. Subsequently, they could bid for specific services in parallel queues, corresponding to different resources or contracts. Studying how to design such auctions to optimize for social welfare or revenue maximization would be a promising area for further research.

Another area for future research is the development of dynamic pricing mechanisms that adapt to changing network conditions, user behaviors, and congestion levels in real time. While a comprehensive examination of these complex issues lies beyond the scope of this paper, they offer promising opportunities for future research and further refinement of my analysis.

\bibliographystyle{splncs04}
\bibliography{blockchain} %

\newpage
\appendix
\section{Appendix} \label{sec:appendix}
\subsection{Expression of delay costs}
\begin{proof}
We begin by considering the definitions of $\overline{C}_{i}(\lambda_i)$ and $\overline{D}_{i}(\lambda_i)$:

\begin{equation}
\overline{C}_{i}(\lambda_i) = \mathbb{E}[C(T_i)] = \int_{0}^{\infty} C(t) f_{T_i}(t) dt
\end{equation}

\begin{equation}
\overline{D}_{i}(\lambda_i) = \mathbb{E}[D(T_i)] = \int_{0}^{\infty} D(t) f_{T_i}(t) dt
\end{equation}

where $f_{T_i}(t)$ is the probability density function of the exponential distribution with rate parameter $\lambda_i$:

\begin{equation}
f_{T_i}(t) = \lambda_i e^{-\lambda_i t}
\end{equation}

For $\overline{C}_{i}(\lambda_i)$, we substitute $C(t) = ct$ and solve:

\begin{align}
\overline{C}_{i}(\lambda_i) &= \int_{0}^{\infty} ct \lambda_i e^{-\lambda_i t} dt \\
&= c\lambda_i \int_{0}^{\infty} t e^{-\lambda_i t} dt \\
&= c\lambda_i \left[-\frac{t}{\lambda_i}e^{-\lambda_i t}\bigg|_{0}^{\infty} - \int_{0}^{\infty} -\frac{1}{\lambda_i}e^{-\lambda_i t} dt\right] \\
&= c\lambda_i \left[0 + \frac{1}{\lambda_i^2}\right] \\
&= \frac{c}{\lambda_i} = \frac{c}{1-\lambda_i}
\end{align}

For $\overline{D}_{i}(\lambda_i)$, we substitute $D(t) = e^{-dt}$ and solve:

\begin{align}
\overline{D}_{i}(\lambda_i) &= \int_{0}^{\infty} e^{-dt} \lambda_i e^{-\lambda_i t} dt \\
&= \lambda_i \int_{0}^{\infty} e^{-(d+\lambda_i)t} dt \\
&= \lambda_i \left[-\frac{1}{d+\lambda_i}e^{-(d+\lambda_i)t}\bigg|_{0}^{\infty}\right] \\
&= \lambda_i \left[0 + \frac{1}{d+\lambda_i}\right] \\
&= \frac{\lambda_i}{d+\lambda_i} = \frac{1-\lambda_i}{1+d-\lambda_i}
\end{align}

Thus,  when the delay discount function is exponential $D(t)=e^{-d t}$ and the additive delay cost is linear $C(t)=c\times t$ where $c,d>0$, the following equations hold:

\begin{align}
    \overline{C}_{i}(\lambda_i)  = & \frac{c}{1-\lambda_i} \notag \\
    \overline{D}_{i}(\lambda_i)  = & \frac{1-\lambda_i}{1+d-\lambda_i}
\end{align}
\end{proof}

\end{document}